\documentclass[aps,pra,showpacs,twocolumn,superscriptaddress,bibnotes,floatfix,10pt]{revtex4-1}

\usepackage{mathtools}
\usepackage{graphicx}
\usepackage{float}
\usepackage[hidelinks]{hyperref}
\usepackage{dcolumn}
\usepackage{amsmath,amssymb,amsthm}
\usepackage{textcomp}
\usepackage{bm}
\usepackage{verbatim}
\usepackage{natbib}
\usepackage[normalem]{ulem}
\usepackage{color}
\usepackage{ulem}

\usepackage[american,british]{babel} 
\usepackage[utf8]{inputenc} 
\usepackage[T1]{fontenc}

\usepackage{hyperref}
\usepackage{xcolor}
\hypersetup{
	colorlinks   = true, 
	urlcolor     = blue, 
	linkcolor    = blue, 
	citecolor   = blue 
}

\newtheorem{definition}{Definition}

\newtheorem*{prop}{Proposition 2}
\newtheorem{proposition}{Proposition}

\DeclarePairedDelimiter\ket{\lvert}{\rangle}
\DeclarePairedDelimiterX\braket[2]{\langle}{\rangle}{#1 \delimsize\vert #2}
\newcommand{\id}{\mathbb{I}}
\newcommand{\sx}{\sigma^{x}}
\newcommand{\sz}{\sigma^{z}}
\newcommand{\sy}{\sigma^{y}}

\newcommand{\cd}{c^{\dagger}}
\newcommand{\g}{\gamma}
\newcommand{\gd}{\gamma^{\dagger}}

\newcommand{\tkh}{\theta_{k}/2}
\newcommand{\tkhh}{\theta_{k'}/2}

\newcommand{\ketbra}[2]{\left|#1\middle\rangle\middle\langle#2\right|}

\newcommand{\norm}[1]{\left\|#1\right\|}

\newcommand{\mean}[1]{\left\langle#1\right\rangle}

\newcommand{\UFSC}{Departamento de Física, Universidade Federal de Santa Catarina,
88040-900, Florianópolis, SC, Brazil}

\begin{document}
\title{Quantum Statistical Complexity Measure as a Signalling of Correlation Transitions} 
\author{André T. Ces\'{a}rio} 
\email{andretanus@gmail.com}
\affiliation{Departamento de F\'{\i}sica - ICEx - Universidade Federal de Minas Gerais, Av. Pres. Ant\^onio Carlos 6627 - Belo Horizonte - MG - Brazil - 31270-901.}
\author{Diego L. B. Ferreira}\affiliation{Departamento de F\'{\i}sica - ICEx - Universidade Federal de Minas Gerais,
Av. Pres. Ant\^onio Carlos 6627 - Belo Horizonte - MG - Brazil - 31270-901.}
\author{Tiago Debarba}
\email{debarba@utfpr.edu.br}
\affiliation{Departamento Acad{\^ e}mico de Ci{\^ e}ncias da Natureza, Universidade Tecnol{\'o}gica Federal do Paran{\'a} (UTFPR), Campus Corn{\'e}lio Proc{\'o}pio, Avenida Alberto Carazzai 1640, Corn{\'e}lio Proc{\'o}pio, Paran{\'a} 86300-000, Brazil.}
\author{Fernando Iemini}\affiliation{Instituto de Física, Universidade Federal Fluminense, 24210-346 Niter\'oi, Brazil.}
\author{Thiago O. Maciel}
\affiliation{\UFSC}
\author{Reinaldo O. Vianna}\affiliation{Departamento de F\'{\i}sica - ICEx - Universidade Federal de Minas Gerais,
Av. Pres. Ant\^onio Carlos 6627 - Belo Horizonte - MG - Brazil - 31270-901.}
\date{\today}

\begin{abstract}
We introduce a quantum version for the statistical complexity measure, in the context of quantum information theory, and use it as a signalling function of quantum order-disorder transitions.  We discuss the possibility for such transitions to characterize interesting physical phenomena, as quantum phase transitions, or abrupt variations in the  correlation distributions. We apply our measure to two exactly solvable Hamiltonian models, namely: the $1D$-Quantum Ising Model and the Heisenberg XXZ spin-$1/2$ chain. 
We also compute this measure for one-qubit and two-qubit reduced states for the considered models, and analyse its behaviour across its quantum phase transitions for finite system sizes as well as in the thermodynamic limit by using Bethe ansatz.
\end{abstract}
\pacs{03.67.Mn, 03.65.Aa}
\maketitle

\section{Introduction}
\label{intro}
Let us consider a physical, chemical or biological process, which can be for example the change in the temperature of the water, the mixture between two solutions or the formation of a neural network. It is intuitive to believe that if one can classify all the possible configurations of such systems, described by their \emph{ordering and disordering patterns}, it would be possible to characterize and control them. As, for example, during the process of changing the temperature of water, by knowing the pattern of ordering and disordering of its molecular structure, it would be possible to characterize and control completely its phase transitions.  

In information theory, the ability to identify certain patterns of order and disorder of probability distributions enables to control the creation, transmission and measurement of information. In this way, 
the characterization and quantification of \emph{complexity} contained in physical systems and their constituent parts is a crucial goal for information 
theory  \cite{badii1997}. One point of consensus in the literature about complexity is that no formal definition of this term exists. Intuition suggests that systems which can be described as “not complex” are readily comprehended: they can be described concisely by means of few parameters or variables, and their information content is low. There are considerable ways to define measures of the degree of complexity of physical systems. Among such definitions, we can mention measures based on data compression algorithms of finite size sequences \cite{lempel76,montano02,Szczepanski09}, Komolgorov or Chaitin measures based on the size of the smallest algorithm that can reproduce a particular type of pattern \cite{kolmogorov65, chaitin66}, and measures concerning the classical information theory \cite{martin03,lamberti04,binder00,shiner99,toranzo14,wackerbauer94,zurek90}. 
Such quantifiers must satisfy some properties: (\textit{i.}) assign a minimum 
 value (possibly zero) for opposite extremes of order 
 and disorder; (\textit{ii.}) should be sensitive to transitions of order-disorder patterns; (\textit{iii.}) and must be computable.   

The statistical complexity assigns the simplicity of a probability distribution to the amount of resources needed to store information \cite{Crutchfield09,Riechers16}. Similarly, in the quantum realm, the complexity of a given density matrix
could be translated as the resource needed to create, operate or measure the quantum state of the system \cite{Gu12,Yang18,Thompson18}. On the other hand, the quantum information meaning of complexity could play an important role in the quantification of transitions of order and disorder patterns, which could indicate some quantum physical phenomenon, as for example quantum phase transitions. 

Regarding the complexity contained in systems, some of the simplest models in physics are:  the ideal gas and the perfect crystal. In an ideal gas model, the system can be found with the same probability in any of the available micro-states, therefore each state contributes equally to the same amount of information. On the other hand, in a 
perfect crystal, the symmetry rules restrict the accessible states of the system to only one very symmetric state. These simple models are extreme cases of minimum complexity, in a scale of order and disorder, therefore, there might exist some intermediate state which contains a maximum complexity value in that scale \cite{lopezruiz95}.

The main goal of this work is to introduce a quantum version of the statistical complexity measure, based on the physical meaning of the characterization and quantification of transitions between order-disorder patterns of quantum systems.  As by-product, to apply this measure in the study of quantum phase transitions. Physical properties of systems across a quantum phase transition are dramatically altered, and in this way it is interesting to understand how the complexity of the system would behave under such transitions. In our analysis, {we} studied the parametric Ising model and the Heisenberg XXZ-$1/2$ model.
 
 The manuscript is organized as follows: In Sec. \ref{complex.class}, we introduce the Statistical Measure of Complexity defined by L\'{o}pes-Ruiz \emph{et al.}, and in Sec. \ref{complex.quant} we introduce the quantum counterpart of this measure: the quantum statistical measure of complexity. We present some properties of this measure and we exhibit a closed expression of this measure for one-qubit. In Sec. \ref{examples} we discuss two interesting examples and applications: the $1D$ quantum Ising model (Sec. \ref{ex.ising}), in which we compute the quantum statistical measure of complexity for one-qubit reduced state from $N$ spins, in the thermodynamic limit with the objective of determining the quantum phase transition point.  We further determine the first-order quantum transition point and the continuous quantum phase transition for the Heisenberg XXZ spin-$1/2$ model, with $h=0$ (Sec. \ref{ex.xxz}), by means of the quantum measure of statistical complexity of the two-qubit reduced state of the nearest neighbours, in the thermodynamic limit. Finally, we give concluding remarks in Sec. \ref{conclusao}.
 
\section{Classical Statistical Complexity Measure - CSCM}
\label{complex.class}
Consider a system possessing $N$ accessible states $\{x_1,x_2,\cdots,x_N\}$, when observed on a particular scale, with each state having an intrinsic probability given by $\vec{p} = \{p_i\}_{i=1}^N$. As discussed before, the candidate function to quantify the complexity of a probability distribution associated with a physical system, must attribute zero value for systems with the maximum degree of order, that is, for pure distributions: $\vec{p} = \{p_i = 1, p_{j\neq i}=0\}$, and also assign zero for disordered system which are characterized by identically distributed vectors (i.i.d.): $\vec{\mathcal{I}} =\{\mathcal{I}_i = 1/N\}$, for all $i=1,\ldots, N$. Let us address the case of ordered and disordered systems separately.

\subsection{Degree of Order}
A physical system possessing the maximum degree of order can be regarded as a system with a symmetry of all of its elements. The probability distribution that describes such systems is best represented by a pure vector, which places the system as having only one possible configuration. Physically this is the case of a gas at zero Kelvin temperature, or a perfect crystal where the symmetry rules restrict the accessible state to a very symmetric one. In order to quantify the degree of order of a given system, the function must assign maximum value for pure probability distributions, and attribute zero for equiprobable configurations. A function capable of quantifying such degree of order is the $l_1$-distance between the probability distribution and the identically distributed vector (i.i.d.):
\begin{equation}
\label{tracedist}
D(\vec{p},\vec{\mathcal{I}}) = \frac{1}{2}||\vec{p} - \vec{\mathcal{I}}||_1 = \frac{1}{2}\sum_i\sqrt{ \left(p_i - \frac{1}{N}\right)^2}, 
\end{equation}
where $\vec{\mathcal{I}}$ is a vector with elements $\mathcal{I}_i = 1/N$, for all $i=1,\ldots, N$. This function plays a role of a \emph{disequilibrium} function and it quantifies the order of a probability vector.
It consists on the sum of the absolute values of the elements of the vector $\vec{p} - \vec{\mathcal{I}}$. The measure $D$ will have zero value for maximally disordered systems and maximum value for maximally ordered systems. 

\subsection{Degree of Disorder}
In contrast with an ordered system, a system possessing the maximum degree of disorder is described by an equiprobable distribution.  This means an equally probable expectation of occurring any of its configurations, as in a fair dice game, or in a partition function of an isolated ideal gas. From a statistical point of view, the probability vector that describes this feature is the identically distributed vector (i.i.d.) $\vec{\mathcal{I}}$, as defined above. One can define the degree of disorder of a system as a function which assigns value of zero for pure probabilities distributions, (associated with maximally ordered distributions), and a maximum value for i.i.d. distributions. A well known function capable of quantifying the degree of disorder of a probability vector is the Shannon entropy: 
\begin{equation}
\label{Shannon}
H(\vec{p}) = -\sum_{i=1}^N p_i \log(p_i).
\end{equation}
In this way, the Shannon entropy $H(\vec{p})$ will assign zero for maximally ordered systems, and a maximal value for i.i.d vectors equals to $\log N$. The $\log$ function is taken in basis $2$ in order to quantify the amount of disorder in bits. 

\subsection{Quantifying Classical Complexity}
With all these extreme behaviors of order: (Eq. \eqref{tracedist}) and disorder: (Eq. \eqref{Shannon}) in mind, L\'{o}pes-Ruiz \emph{et al.} defined a classical statistical measure of complexity constructed as a product of such order-disorder quantifiers. This function should deal with the intermediate states of order and disorder, which can be associated with a kind of a complex behavior, measuring the amount of complexity of a statistical physical system \cite{lopezruiz95}.
\begin{definition}[Classical Statistical Measure of Complexity]
Let us consider a probability vector given by $\vec{p} = \{p_i\}_{i=1}^N$, with $dim(\vec{p}) = N$, associated with a random variable $X$, representing all possible states of a system. The function $\mathcal{C}(\vec{p})$ is a measure of the system's complexity and can be defined as: 
\begin{equation}
\mathcal{C}(\vec{p}) =\frac{1}{\log{N}} H(\vec{p})D(\vec{p},\vec{\mathcal{I}}).
\label{complex.classica}   
\end{equation}
The function $\mathcal{C}(\vec{p})$ will vanish for simple systems, such as the ideal gas model or a crystal, and it should reach a maximum value for some state.
\end{definition}
The classical statistical measure of complexity depends on the nature of the description associated to a system and with the scale of observation \cite{lopezruiz95}. This function, generalized as a functional of a probability distribution, has a relation with a time series generated by a classical dynamical system \cite{rosso13}. Two ingredients are fundamental in order to define such a quantity: the first one is an entropy function which quantifies the information contained in a system, and could also be the Tsallis' Entropy \cite{tsallis}, Escort-Tsallis \cite{escort.tsallis} or Rényi Entropy \cite{renyi}. The other ingredient is the definition of a distance function in the state of probabilities, which indicates the disequilibrium relative to a fixed distribution (in this case the distance to the i.i.d. vector). For this purpose we can use an Euclidean Distance (or some other $p$-norm), the Bhattacharyya Distance \cite{bhatt} or Wooters' Distance \cite{wooters}. We can also apply a statistical measure of divergence, for example the Classical Relative Entropy \cite{kl}, Hellinger distance and also Jensen-Shannon Divergence \cite{HSJS}. We make note of some other generalized versions of complexity measures in recent years, and these functions have proved to be useful in some branches of classical information theory \cite{martin03,lopezruiz09,sanudo08,montgomery08,sen11,sanudo09,moustakids10,moreno14,calbet01,ruizSanudo12}.

\section{Quantum Statistical Complexity Measure - QSCM}
\label{complex.quant}
\subsection{Quantifying Quantum Complexity}
The quantum version of the statistical complexity measure quantifies the amount of order-disorder complexity in a quantum system. For quantum systems, the probability distribution is replaced by a density matrix (positive semi-definite and trace one). Likewise the classical case, the extreme cases of order and disorder are respectively the pure quantum states $\ketbra{\psi}{\psi}$, and the maximally mixed state: $\mathcal{I} := \mathbb{I}/N$, where $N$ is the dimension of these two quantum systems. Also in analogy with the description for classical probability distributions, the quantifier of quantum statistical complexity must be zero for maximum degree of order and disorder. One can define the Quantum Statistical Complexity Measure (QSCM) as a product of an order and a disorder quantifiers: a quantum entropy, which measure the amount of disorder related to a quantum system and a pairwise distinguishability measure of quantum states which plays the role of a \emph{disequilibrium} function. One of the functions to measure the amount of disorder of a quantum system is the von Neumann entropy and it is given by:
\begin{equation}
S(\rho) = -\text{Tr}[\rho \log (\rho)],
\end{equation}
where $\rho$ is the density matrix of the system. The trace distance between $\rho$ and the the maximally mixed state quantifies the order:
\begin{equation}
D(\rho,\mathcal{I}) \equiv \norm{\rho - \mathcal{I}}_1 = \frac{1}{2}\text{Tr}\sqrt{\left(\rho - \mathcal{I}\right)^2}.
\end{equation}

\begin{definition}[Quantum Statistical Complexity Measure (QSCM)]\label{d2}
Let $\rho\in\mathcal{D}(\mathcal{H}_N)$ be a quantum state over an $N-$dimensional Hilbert space. Then we can define a quantifier of quantum statistical measure of complexity as the following functional of $\rho$: 
\begin{equation}
\mathcal{C}(\rho) = \frac{1}{\log{N}} S(\rho)\cdot D(\rho,\mathcal{I}),
\label{complexquantica}   
\end{equation}
where  $S(\rho)$ is the von Neumann entropy, and  $D(\rho,\mathcal{I})$ is a distinguishability quantity between the state $\rho$ and the normalized maximally mixed state $\mathcal{I}$.   
\end{definition}

In analogy with the classical counterpart, in the definition of quantum statistical complexity measure, there is a \emph{carte blanche} in choosing the quantum entropy function, as for example the quantum Rényi entropy \cite{qrentropy}, or quantum Tsallis entropy \cite{petz}. Similarly, we can choose other disequilibrium function as a measure of distinguishability of quantum states. It can be some Shatten-$p$ norm \cite{bhatia}, or a quantum Rényi relative entropy \cite{rajagopal}, the quantum skew divergence \cite{audeanert14}, or a quantum relative entropy \cite{schumacher00}. Another feature that might generalize the quantity is to define a more general quantum state $\rho^*$ as a reference state, (rather than normalized maximally mixed state $\mathcal{I}$), in the disequilibrium function. This choice must be guided by some physical symmetry or interest. Some obvious candidates are the thermal mixed quantum state, and the canonical thermal pure quantum state \cite{sugiura13}. 

For our purposes here, we define the QSCM by means of the trace distance, between the quantum state $\rho$ and $\mathcal{I}$ acting as the disequilibrium function, once that it is the most distinguishable distance in Hilbert space, and also monotonic under stochastic operations.

\subsection{Some Properties of the QSCM}
\label{prop}
To complete our introduction of the quantifier of quantum statistical complexity, we should require some properties to guarantee a \emph{bona fide} information quantifier. The amount of order-disorder, as measured by the QSCM, must be invariant under unitary operations because it is related with the purity of quantum states.
\begin{proposition}[Unitary Invariance]
The Quantum Statistical Complexity Measure is invariant under unitary transformations:
\begin{equation}
\mathcal{C}(U \rho U^{\dagger}) = \mathcal{C}(\rho), 
\end{equation}
where $\rho\in\mathcal{D}(\mathcal{H}_N)$ and $U$ is a unitary transformation on $\mathcal{H}_N$. 
\end{proposition}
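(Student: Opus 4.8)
The plan is to exploit the multiplicative structure of $\mathcal{C}(\rho)$ and prove that each of its two factors --- the von Neumann entropy $S(\rho)$ and the trace distance $D(\rho,\mathcal{I})$ --- is separately invariant under the conjugation $\rho \mapsto U\rho U^{\dagger}$. Since the prefactor $1/\log N$ depends only on the Hilbert-space dimension $N$, which is unchanged by a unitary acting on $\mathcal{H}_N$, invariance of the product follows at once by multiplying the two invariances together.

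First I would treat the entropy. Writing $\rho' = U\rho U^{\dagger}$, the key fact is that the operator logarithm intertwines with unitary conjugation, $\log(U\rho U^{\dagger}) = U(\log\rho)U^{\dagger}$, which one sees by diagonalizing $\rho = \sum_i \lambda_i \ketbra{i}{i}$ and noting that $U\rho U^{\dagger}$ has the same eigenvalues $\lambda_i$ with the rotated eigenvectors $U\ket{i}$. Then $S(U\rho U^{\dagger}) = -\text{Tr}[U\rho U^{\dagger}\, U(\log\rho)U^{\dagger}] = -\text{Tr}[U\rho(\log\rho)U^{\dagger}] = -\text{Tr}[\rho\log\rho] = S(\rho)$, where the simplifications use $U^{\dagger}U = \id$ together with cyclicity of the trace. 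Equivalently, $S$ is a function of the eigenvalue spectrum alone, which conjugation leaves untouched.

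For the trace distance the decisive observation is that the reference state is the maximally mixed state $\mathcal{I} = \id/N$, which is proportional to the identity and therefore invariant under every unitary: $U\mathcal{I}U^{\dagger} = U(\id/N)U^{\dagger} = \id/N = \mathcal{I}$. Hence $U\rho U^{\dagger} - \mathcal{I} = U(\rho - \mathcal{I})U^{\dagger}$, and the claim reduces to unitary invariance of the trace norm. This in turn follows because the singular values of $A$ and $UAU^{\dagger}$ coincide, so $\norm{U(\rho-\mathcal{I})U^{\dagger}}_1 = \norm{\rho-\mathcal{I}}_1$, giving $D(U\rho U^{\dagger},\mathcal{I}) = D(\rho,\mathcal{I})$.

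I do not expect any genuine obstacle here: both factors are built from spectral quantities that unitary conjugation preserves. The one point worth flagging is that invariance of the disequilibrium term relies essentially on the special choice of reference state $\mathcal{I}\propto\id$; for a generic reference $\rho^{*}$ of the kind mentioned after Definition~\ref{d2}, the step $U\rho^{*}U^{\dagger} = \rho^{*}$ would fail and the proposition would no longer hold, so this hypothesis is what does the real work.
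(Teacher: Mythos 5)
Your proof is correct and takes essentially the same route as the paper, whose entire argument is the one-line remark that the claim ``comes directly from the invariance under unitary transformation of von Neumann entropy and trace distance.'' You simply supply the details the paper leaves implicit---spectral invariance of $S$, the identity $U\mathcal{I}U^{\dagger}=\mathcal{I}$, and unitary invariance of the trace norm---so there is nothing further to compare.
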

This statement comes directly from the invariance under unitary transformation of von Neumann entropy and trace distance. Another important property regards the case of inserting copies of the system in some experimental context. Let us consider an experiment in which the experimentalist must quantify the QSCM of a given state $\rho$ by means of a certain number $n$ of copies $\rho^{\otimes n}$, which therefore implies that the QSCM of the copies should be bounded by the quantity of only one copy. 
\begin{proposition}[Sub-additivity over copies]
\label{subad}
Given a product state $\rho^{\otimes n}$, with $dim(\rho) = N$, the QSCM is a sub-additive function over copies:
 \begin{equation}
\mathcal{C}(\rho^{\otimes n})\leq n \mathcal{C}(\rho). 
\end{equation}
\end{proposition}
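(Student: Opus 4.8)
The plan is to exploit the multiplicative behaviour of all three ingredients of $\mathcal{C}$ under tensor products, so that the dimensional and entropic factors combine and leave a single nontrivial inequality for the disequilibrium term. First I would record how each factor acts on $\rho^{\otimes n}$, which lives on a Hilbert space of dimension $N^n$: the normalising factor becomes $\log(N^n)=n\log N$; the von Neumann entropy is additive, $S(\rho^{\otimes n})=nS(\rho)$; and the maximally mixed state factorises, $\mathcal{I}_{N^n}=\id_{N^n}/N^n=(\id_N/N)^{\otimes n}=\mathcal{I}^{\otimes n}$. Substituting these into Definition \ref{d2} yields
\begin{equation}
\mathcal{C}(\rho^{\otimes n})=\frac{1}{n\log N}\,n\,S(\rho)\,D(\rho^{\otimes n},\mathcal{I}^{\otimes n})=\frac{S(\rho)}{\log N}\,D(\rho^{\otimes n},\mathcal{I}^{\otimes n}),
\end{equation}
so the two explicit factors of $n$ cancel.

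Next I would reduce the claim to a statement purely about the trace distance. Since $S(\rho)\ge 0$ and $\log N>0$, comparing the display above with $n\mathcal{C}(\rho)=\tfrac{nS(\rho)}{\log N}D(\rho,\mathcal{I})$ shows that the proposition is equivalent to
\begin{equation}
D(\rho^{\otimes n},\mathcal{I}^{\otimes n})\le n\,D(\rho,\mathcal{I}).
\end{equation}
To establish this I would prove that the trace distance is subadditive under tensoring by a telescoping triangle-inequality argument. Writing $\rho^{\otimes n}=\rho\otimes\rho^{\otimes(n-1)}$ and $\mathcal{I}^{\otimes n}=\mathcal{I}\otimes\mathcal{I}^{\otimes(n-1)}$, the triangle inequality gives
\begin{align}
D(\rho^{\otimes n},\mathcal{I}^{\otimes n}) &\le D\bigl(\rho\otimes\rho^{\otimes(n-1)},\,\mathcal{I}\otimes\rho^{\otimes(n-1)}\bigr) \notag\\
&\quad + D\bigl(\mathcal{I}\otimes\rho^{\otimes(n-1)},\,\mathcal{I}\otimes\mathcal{I}^{\otimes(n-1)}\bigr).
\end{align}
The first term collapses to $D(\rho,\mathcal{I})$ and the second to $D(\rho^{\otimes(n-1)},\mathcal{I}^{\otimes(n-1)})$, after which induction on $n$ closes the bound at $n\,D(\rho,\mathcal{I})$.

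The crux — and the one place the argument is not purely formal — is justifying that tensoring the same factor on both arguments leaves the trace distance unchanged, that is, $D(A\otimes\sigma,B\otimes\sigma)=D(A,B)$ for any density operator $\sigma$. This is exactly the multiplicativity of the Schatten-$1$ norm under tensor products, $\norm{X\otimes\sigma}_1=\norm{X}_1\,\norm{\sigma}_1$, applied to $X=A-B$ together with $\norm{\sigma}_1=1$. Everything else is bookkeeping, so I expect this norm-factorisation step to be the main (and only mildly technical) obstacle, even though it is a standard property of the trace norm.
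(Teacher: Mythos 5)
Your proposal is correct and follows essentially the same route as the paper's Appendix A: additivity of the von Neumann entropy cancels the $n$ in the normalisation $\log(N^n)$, reducing the claim to the sub-additivity of the trace distance under tensor products, which is then closed by induction. The only difference is that the paper simply cites this sub-additivity from the literature, whereas you supply its proof via the triangle inequality and the multiplicativity of the trace norm, $\norm{X\otimes\sigma}_1=\norm{X}_1\norm{\sigma}_1$ — a correct and self-contained addition.
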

Indeed this is an expected property for a measure of information, since the 
regularized number of bits of information gained from a given system 
cannot increase just by considering more copies of the same system. 
The proof of Proposition \ref{subad} is in Appendix \hyperref[appendixA]{A}, and it comes from the additivity of von Neumann entropy and sub-additivity of trace 
distance. 

It is important to notice that quantum statistical complexity is not sub-additive over general extensions with quantum states, for example, {\bf \emph{i.}} {\it extensions with maximally mixed states}: let us consider a given state $\rho$ is extended with one maximally mixed state $\mathcal{I}= \mathbb{I}/N$, with $\text{dim}(\rho) = \text{dim}(\mathcal{I}) = N$. 
\begin{align}
\mathcal{C}(\rho\otimes\mathcal{I}) 
&= \frac{S(\rho) + \log N}{2\log N}D(\rho,\mathcal{I}), 
\label{extension.rho.iN} \\
& = \frac{\mathcal{C}(\rho)}{2}+\frac{D(\rho,\mathcal{I})}{2}\\
& \leq\frac{\mathcal{C}(\rho)}{2} +\frac{1}{2}. 
\label{bound.ext.rho.iN.}
\end{align} 
Eq. \eqref{bound.ext.rho.iN.} presents an upper bound to the QSCM for this extended state. This feature displays that the measure of the compound state is bounded by the quantity of one copy. {\bf \emph{ii.}} In Eq. \eqref{d2.extension.rho.iN} we present the QSCM for a {\it more general extension given by} $\rho^{\otimes n}\otimes\mathcal{I}^{\otimes n}$. This feature also shows that the measure of the compound state is also bounded by the quantity of one copy. 
\begin{align}
\mathcal{C}(\rho^{\otimes n}\otimes\mathcal{I}^{\otimes n})&\leq\displaystyle  n\frac{S(\rho) + \log N}{2\log N}D(\rho,\mathcal{I}), \\
&\leq\displaystyle n\left(\frac{\mathcal{C}(\rho)}{2}+\frac{D(\rho,\mathcal{I})}{2}\right),\\
&\leq\displaystyle n\left(\frac{\mathcal{C}(\rho)}{2} +\frac{1}{2}\right).   
\label{d2.extension.rho.iN}
\end{align} 
{\bf \emph{iii.}} As a last example of {\it nonextensivity over general compound states}, let us consider the extension with a pure state $\ketbra{\psi}{\psi}$, with $\dim(\rho) = dim(\ketbra{\psi}{\psi})$. 
\begin{align}
\mathcal{C}(\rho\otimes\ketbra{\psi}{\psi}) &= \displaystyle\frac{S(\rho)}{2\log N}D(\rho,\ketbra{\psi}{\psi}),\\
&\leq\displaystyle\frac{S(\rho)}{2\log N}\left(D(\rho,\mathcal{I}) + \frac{N-1}{N}\right).
\label{extension.rho.pure}
\end{align}

As discussed above, the QSCM is a measure that intends to
detect changes on properties, as for example changes on 
patterns of order and disorder. Therefore, the measure
must be a continuous function over the parameters of the states responsible for its transitional
characteristics. Naturally, the quantum 
complexity is a continuous function, since it comes from the product of two continuous functions. Due to continuity, it is possible to define the derivative function of the quantum statistical complexity measure:
\begin{definition}[Derivative]
\label{deriv12}
Let us consider a physical system described by the one-parameter set of states:   
$\rho(\alpha) \in \mathcal{D}(\mathcal{H}_N)$, for $\alpha \in 
\mathbb{R}$. We can define the derivative with respect to $\alpha $ as:
\begin{equation}
\label{l1}
\frac{d\mathcal{C}}{d\alpha}:=\lim_{\varepsilon\to 
0}\frac{\mathcal{C}(\rho(\alpha+\varepsilon))-\mathcal{C}(\rho(\alpha))
}{\varepsilon}.
\end{equation}
\end{definition}
In the same way as defined in Def. \ref{deriv12}, it is possible to obtain higher order derivatives. 

The description of physical systems depends on measurable quantities such as temperature, interaction strength, interaction range, orientation of an external field, \emph{etc.} These quantities can be described by parameters in a suitable space, for example, let us consider a parameter describing some physical quantity $\alpha$, and a set of one parameter states $\rho(\alpha)$. The study of transitions between patterns of order and disorder with the objective of inferring physical properties of a system can generate great interest. 

At low temperatures, physical systems are typically ordered, increasing the temperature of the system, they can undergo phase transitions or order-disorder transitions into less ordered states: solids lose their crystalline form in a solid-liquid transition; iron loses magnetic order if heated above the Curie point in a ferromagnetic-paramagnetic transition, \emph{etc}. For many-particle and composed systems, local change on the order-disorder degree can be also associated to a transition in the correlations pattern. In this way, a detectable change in these parameters may indicate an alteration in system configuration which is considered here as a changing in the pattern of order-disorder, or in other words, as a correlation transition, as presented in Def. \ref{ordem.desordem}. 
\begin{definition}[Correlation Transition]
In many-particle systems a transition of correlations occurs when a system changes from a state that has a certain order, pattern, or correlation, to another state possessing another order or correlation.
\label{ordem.desordem}
\end{definition}
In an abstract manner, a quantum state undergoing a path through the i.i.d. identity matrix, is an example of such transitions which may have physical meaning as we will observe later in some examples. Let us suppose that a certain subspace of a quantum system can be interpreted as having a certain order and there exists a path in which it passes through the identity. This path can be analyzed as having an order-disorder transition. In order to illustrate the formalism of quantum statistical complexity in this context of order-disorder transition, In Sec. \ref{examples} we apply it to two well-known quantum systems that exhibit quantum phase transitions: the $1D$ quantum Ising Model (Sec. \ref{ex.ising}), and the Heisenberg XXZ spin-$1/2$ model chain (Sec. \ref{ex.xxz}).

\section{Examples and Applications}
\label{examples}
In this section we calculate analytically the expression of QSCM for qubits in Sec.~\ref{c.one}, and present the application of QSCM on physical system in orther to evince quantum phase transition and correlation ordering transitions for 1D-Quantum Ising Model in Sec.~\ref{ex.ising} and XXZ-Heisenberg $1/2$-spin chain in Sec.~\ref{ex.xxz}. 
\subsection{QSCM of One-Qubit}
\label{c.one}

Let us suppose we have a one-qubit state $\rho$, written in the Bloch basis. In Eq. \eqref{complexidade.um}, we exhibit analytically the Quantum Statistical Complexity Measure $\mathcal{C}(\rho)$ of one-qubit, written as $\rho = \frac{1}{2}(\mathbb{I}+\vec{r}\cdot\vec{\sigma})$:
\begin{equation}
\mathcal{C}(r) = -\frac{r^2 }{2}\mathrm{arctanh}(r)-\frac{r}{4}\log \left(\frac{1-r^2}{4}\right).
\label{complexidade.um} 
\end{equation}
Where $\vec{r} = (x,y,z)$, $r = |\vec{r}| = \sqrt{x^2 + y^2 + z^2}$, $0\leq r \leq 1$, and $\vec{\sigma}$ is the Pauli matrix vector.

It is interesting to notice that the quantum statistical complexity of one-qubit written in the Bloch basis is a function dependent only on $r$. This expression will be useful in the study of quantum phase transitions for example in the $1D$-Ising model, discussed in Sec. \ref{ex.ising}, where an analytical expression for the state of one-qubit reduced from $N$ spins, in the thermodynamic limit, will be obtained. Other useful expressions can be obtained, for example, the trace distance between the state and the normalized identity for one-qubit is also a function of $r$, in the Bloch's basis, $D(r) = r/2$, and therefore, the entropy function can be easily written as $S(r) = 2C(r)/r$ by using Eq. \eqref{complexidade.um}. In addition we exhibit the first (Eq. \eqref{dC}), and second (Eq. \eqref{d2C}) derivatives of QSCM, for one-qubit written in the Bloch Basis. One can observe that these functions also depend only on $r$: 
\begin{equation}
\frac{d\mathcal{C}(r)}{dr} = -r\cdot\text{arctanh}(r)-\frac{1}{4} \text{log}\left(\frac{1-r^2}{4}\right),
\label{dC}
\end{equation}
\begin{equation}
\frac{d^2\mathcal{C}(r)}{dr^2} = \frac{r}{2 \left(1-r^2\right)}-\text{arctanh}(r).
\label{d2C} 
\end{equation}

\subsection{\texorpdfstring{$1D$}{1D} Quantum Ising Model}
\label{ex.ising}

The $1D$ quantum Ising Model presents a quantum phase transition and, despite its simplicity, still
generates a lot of interest from the research community. One of the motivations lies in the fact that spin chains possess a great importance in modelling quantum computers. The Hamiltonian of the quantum Ising model is given by:
\begin{equation}\label{ising}
\mathcal{H} = -J\displaystyle\sum_{j=1}^{N}\sigma_{j}^{x}\sigma_{j+1}^{x} - g\sigma_j^{z},
 \end{equation}
where $\{\sigma^x, \sigma^y, \sigma^z \}$ are the Pauli matrices, $J$ is an exchange constant that sets the interaction 
strength between the pairs of first neighbors $\{j,j+1\}$, and $g$ is a parameter that 
represents an external transverse field. Without loss of generality, we can set $J=1$, since it simply defines an energy scale for the Hamiltonian. The Ising model ground state can be obtained analytically by a diagonalization consisting of three steps: 
\begin{itemize}
    \item A Jordan-Wigner transformation: 
    \[ \sz_{j} \longrightarrow 1-2c_{j}^{\dagger}c_{j}, \]
    where $c_j$ and $c^{\dagger}_{j}$ are the annihilation-creation operators, respecting the anti-commutation relations: $\{c_{j},\cd_{k}\} =\delta_{jk}I$ and $\{c_{j},c_{k}\}=0$;
    \item a Discrete Fourier Transform (DFT): 
    \[c_{j} \longrightarrow \displaystyle\frac{1}{\sqrt{N}}\sum_{k=0}^{N-1}c_{k}\,e^{2\pi i(kj)/N}, \]
    \item a Bogoliubov transformation:
    \[c_{k} \longrightarrow \cos(\theta_k/2)\g_{k}-\sin(\theta_k/2)\gd_{-k}, \]
\end{itemize}
where $\theta_k$ represents the basis rotation from the mode $c_k$ to the new mode representation $\g_k$.
The angles $\theta_{k}$ are chosen such that the ground state of the Hamiltonian in Eq.\eqref{ising} is the vacuum state in $\gamma_{k}$ mode representation,  and it is given by $\theta_k = \mathrm{arctan}\left(\frac{\sin(k)}{g-\cos(k)}\right)$  \cite{nielsen02}. 

We can calculate the reduced density matrix of one spin by using the Bloch representation, in which all coefficients are obtained \emph{via} expectation values of Pauli operators. The one-qubit state in the site $j$, $\rho_{j}^{(1)}$ can be written as 
\begin{equation}
 \rho_{j}^{(1)} = \frac{\id}{2} + \frac{\vec{r}_{j}\cdot\vec{\sigma}_{j}}{2},
\end{equation}
where $r^{a}_{j}=\left<\sigma^{a}_{j}\right>$  are expected values in vacuum state in the site $j$, and $a=x,y,z$. 
Note that $\sx_{j},\sy_{j}=0\quad \forall j$, because they combine an odd number of fermions. Therefore, the Bloch's vector possesses only the $z$-component. Let us define $\tkh = \beta_k$, and $\tkhh = \beta_{k'}$. Thus, the $z$-component will be given by
\begin{equation*}
\begin{aligned}
 \sz_{j} &= 1-2c_{j}^{\dagger}c_{j},\\
         &= 1-\frac{2}{N}\sum_{k,k'}e^{-i(k-k')j}c_{k}^{\dagger}c_{k'},\\
         &= 1-\frac{2}{N}\sum_{k,k'}e^{-i(k-k')j}\left(\cos(\beta_k)\gd_{k}-\sin(\beta_k)\g_{-k}\right)\\
         &\times\left(\cos(\beta_{k'})\g_{k'}-\sin(\beta_{k'})\gd_{-k'}\right).\\
         &= 1-\frac{2}{N}\sum_{k,k'}e^{-i(k-k')j}\left(\cos(\beta_k)\cos(\beta_{k'})\gd_{k}\g_{k'}-\right.\\
         &\left.\cos(\beta_k)\sin(\beta_{k'})\gd_{k}\gd_{-k'}-\sin(\beta_k)\cos(\beta_{k'})\g_{-k}\g_{k'}\right.\\
         & \left.+\sin(\beta_k)\sin(\beta_{k'})\g_{-k}\gd_{-k'}\right).
\end{aligned}
\end{equation*}

As discussed above, the only non-vanishing term will be $\left<\sz_{j}\right>$, and therefore:

\begin{align}
 \left<\sz_{j}\right>
         &= 1-\frac{2}{N}\sum_{k,k'}e^{-i(k-k')j}\left(\cos(\beta_k)\cos(\beta_{k'})\left<\gd_{k}\g_{k'}\right> - \right.\nonumber\\
         &\left.\cos(\beta_k)\sin(\beta_{k'})\left<\gd_{k}\gd_{-k'}\right> -\sin(\beta_k)\cos(\beta_{k'})\left<\g_{-k}\g_{k'}\right> \right.\nonumber\\
         &\left.+ \sin(\beta_k)\sin(\beta_{k'})\left<\g_{-k}\gd_{-k'}\right>\right),\nonumber\\
         &= 1-\frac{2}{N}\sum_{k,k'}e^{-i(k-k')j}\sin(\beta_k)\sin(\beta_{k'})\delta_{k,k'},\nonumber\\
\left<\sz_{j}\right> &= 1-\frac{2}{N}\sum_{k}\sin^{2}(\tkh).\label{bloch.rho1}
\end{align}
In Eq.\eqref{diego} we exhibit the one-qubit reduced density matrix  in Bloch's representation ($\rho_1$):
\begin{equation}
\rho_1 = \frac{\mathbb{I}}{2}+\left( \frac{1}{2}-\frac{1}{N}\sum\limits_{k\in\mathcal{K}}\sin^2\left(\frac{\theta_k}{2}\right)\right)\sigma^z,
\label{diego}   
\end{equation}
where $\theta_k$ is the Bogoliubov rotation angle and the summation index $k\in\mathcal{K}$, with $\mathcal{K} = [\pm\frac{\pi}{N},\pm\frac{3\pi}{N},\cdots,\pm\left(\pi - \frac{2\pi}{N}\right)]$. This result is independent of the spin index, as expected for systems that are translational invariant.

We can now calculate QSCM for the reduced density matrix analytically. From Eq.~\eqref{complexidade.um}, we simply identify the Bloch vector of the reduced density matrix having only $z$-component, as written in Eq. \eqref{bloch.rho1}. This quantity in the thermodynamic limit can be obtained by taking the limit of the Riemann sums, $\mean{\sz}(g) = \text{lim}_{N\to\infty}\sum_{k=1}^N \mean{\sz_{k}}$. This Bloch's vector component is a function of the field $g$, that is, $\mean{\sz}(g) = \text{lim}_{N\to\infty}\sum_{k=1}^N 1-\frac{2}{N}\sum_{k}\sin^{2}(\theta(k))$, with $\theta(k) = \mathrm{arctan}\left(\frac{\sin(l)}{g-\cos(l)}\right)$, and $l = \frac{(2k-1)\pi}{N} - \pi$. Thus, the $z$ component of Bloch's vector given in Eq. \eqref{bloch.rho1} goes to the following integral, written as: 
\begin{equation}
\mean{\sz}(g) = 1 - \frac{2}{\pi}\int\limits_{-\pi}^{0} \sin^2\left( \frac{1}{2}\text{arctan}\left(\frac{\text{sin}(\xi)}{g - \text{cos}(\xi)}\right)\right)d\xi.
\label{bloch.integral}
\end{equation}
This integral can be solved analytically in the thermodynamic limit for some values of the transverse field parameter.
For $g=0$, we can easily obtain $\mean{\sigma_z} = 0$, which corresponds to a one-qubit maximally mixed reduced state $\rho_1 = \mathbb{I}/2$. At $g=1$, \emph{i.e.}, in the critical point, the integral given in Eq. \eqref{bloch.integral} can be also solved and we obtain $\mean{\sigma_z} = 2/\pi$, in the thermodynamic limit. The eigenvalues of the one-qubit reduced state $\rho_1$, at $g=1$, can be obtained analytically as: $\{ 1/2 \pm 1/\pi \}$. 
 For other values of $g$, the integral written in Eq. \eqref{bloch.integral} can be written as elliptic integrals of first and second kinds \cite{pfeuty70}. By using the result given in Eq. \eqref{bloch.integral} on Eq. \eqref{complexidade.um}, we can thus obtain the quantum statistical complexity measure for one-qubit reduced density matrix in the thermodynamic limit as a function of the transverse field parameter $g$. 

In Fig. \ref{D2complex.rho3} we present the second derivative of QSCM measure with respect to the transverse field parameter $g$, for different finite system sizes $N=4,8,16,1000$. We Also calculated this derivative in the thermodynamic limit by using Eq. \eqref{d2C} and Eq. \eqref{bloch.integral}.

\begin{figure}[h]
\includegraphics[width=0.9\columnwidth]{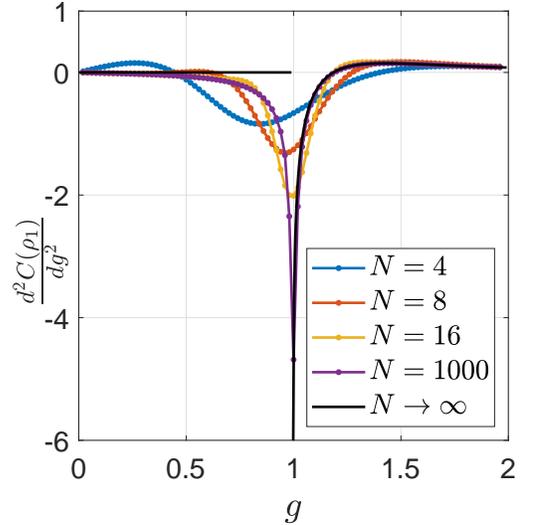}
\caption{Second derivative of QSCM with respect to the transverse field parameter $g$, for different finite system sizes: $N = 4,6,16,1000$ and for the thermodynamic limit (continuous line), $N\to\infty$, for $g\in [0,2]$.}
\label{D2complex.rho3}
\end{figure}
It is well known that at $g=1$, there is a quantum phase transition of second order \cite{damski}. By observing Fig. \ref{D2complex.rho3}, we can directly recognize a sharp behaviour of the measure in the transition point $g=1$.

\subsection{XXZ-\textonehalf\hspace{2.0mm}Model} 
\label{ex.xxz}
Quantum spin models as XXZ-$1/2$ model can be simulated experimentally by using Rydberg-excited atomic ensembles in magnetic microtrap arrays \cite{whitlock17}, and also by a low-temperature scanning tunnelling microscopy \cite{toskovic16} among many others quantum simulation experimental arrangements. Let us consider a Heisenberg XXZ spin-$1/2$ model defined by the following Hamiltonian: 
\[
\mathcal{H} = -J\displaystyle\sum_{j=1}^N \left[ S_{j}^{x}S_{j+1}^{x} + S_{j}^{y}S_{j+1}^{y} + \Delta S_{j}^{z}S_{j+1}^{z}\right] -2h\sum_{j=1}^{N} S_{j}^{z},
\]
with periodic boundary conditions, $S_{j+N}^{\alpha} = S_{j}^{\alpha}$, and $S_j^\alpha = \frac{1}{2}\sigma_j^\alpha$, where $\sigma_j^\alpha$ are Pauli matrices, and $\Delta$ is the uni-axial parameter strength, which is a ratio of $S^z$ interactions between $S^x$ or $S^y$ interactions. This model can interpolate continuously between classical Ising, quantum XXX, and quantum XY models. At $\Delta =0$, it turns to the quantum $XY$ or $XX0$ model which corresponds to free fermions on a lattice.  For $\Delta = 1$, ($\Delta = -1$), the anisotropic XXZ model Hamiltonian reduces to the isotropic (ferro)anti-ferromagnetic XXX model Hamiltonian. For $\Delta \to \pm \infty$, the model goes to an (ferro)anti-ferromagnetic Ising Model. 

The parameter $J$ defines an energy scale and only its sign is important: we observe a ferromagnetic ordering along the $x-y$ plane for positive values of $J$, and, for negative ones, we notice the anti-ferromagnetic alignment. The uni-axial parameter strength $\Delta$ distinguishes a planar regime $x-y$ (when $|\Delta| < 1$), from the axial alignment, (for $|\Delta| > 1$) \cite{BAnotes}. Thereby, it is useful to define two regimes: for $|\Delta| > 1$, the {\it Ising-like} regime and $|\Delta| < 1$, the {\it XY-like regime} in order to model materials possessing respectively an easy-axis and easy-plane magnetic anisotropies \cite{sariyer18}. 

Here we are interested in quantum correlations between the nearest and next to nearest neighbour spins in the XXZ spin-$1/2$ chain with $J=1$, at a temperature of $0K$, and zero external field ($h=0$). The matrix elements of $\varrho_{i+r}$ are written in function of expectation values which mean the correlation functions for nearest neighbour $r=1$, (for  $\varrho_{i+1}$), and the correlation functions for next-to-nearest neighbours $r=2$, (for $\varrho_{i+2}$), and they are given by a set of integral equations which can be found in \cite{biao.liang,Justino,Takahashi,Kato04,Kato03}. These two point correlation functions for the XXZ model at zero temperature and in the thermodynamic limit can be derived by using the Bethe Ansatz technique.
In Eq. \eqref{estado.biao}, due to the symmetry in the Hamiltonian model, it is presented the two-qubit reduced density matrix of sites $i$ and $i+r$, for $r=1,2$, in the thermodynamic limit, written in the basis $\ket{1} =\ket{\uparrow\uparrow}$, $\ket{2} = \ket{\uparrow\downarrow}$, $\ket{3} = \ket{\downarrow\uparrow}$ and $\ket{4} = \ket{\downarrow\downarrow}$, where $\ket{\uparrow}$ and $\ket{\downarrow}$ are the eigenstates of the Pauli $z$-operator \cite{biao.liang}: 
\begin{equation}
\varrho_{i+r} = 
\begin{pmatrix} 
    \varrho_{11}      &        0      &        0     &   0 \\ 
    0     &   \varrho_{22}   &   \varrho_{23}  &   0 \\
    0     &   \varrho_{32}   &   \varrho_{33}  &   0 \\
    0     &        0      &        0     &  \varrho_{44} 
\end{pmatrix}, 
\label{estado.biao}
\end{equation}
where $\varrho_{11} = \frac{1 + \mean{\sigma_{i}^z\sigma_{i+r}^z}}{4}$, $\varrho_{23} = \frac{ \mean{\sigma_{i}^x\sigma_{i+r}^x}}{2}$, and $\varrho_{22} = \frac{1 - \mean{\sigma_{i}^z\sigma_{i+r}^z}}{4}$, with $\varrho_{11} = \varrho_{44}$, $\varrho_{23} = \varrho_{32}$ and $\varrho_{22} = \displaystyle\varrho_{33}$.

In Fig. \ref{CSD.XXZ} we show the QSCM, $\mathcal{C}(\varrho_{i+1})$, for nearest neighbour, in contrast with the von Neumann entropy $S(\varrho_{i+1})$ and the trace distance $D(\varrho_{i+1},\mathcal{I})$ between $\varrho$ and the normalized identity matrix, all as a function of the uni-axial parameter strength $\Delta$. 
The XXZ model possess two critical points: the first-order transition occurs at $\Delta = -1$, and also a continuous phase transition shows up at $\Delta = 1$, \cite{Takahashi05}. An interesting feature of the QSCM  is the fact that it evince points of correlation transitions, related to the order-disorder transitions, which may not necessarily be connected with phase transition points. We take note of the cusp point in Fig. \ref{CSD.XXZ}, at $\Delta \approx 2.178$. For quantum statistical complexity measure correlations transitions appears to be interesting point in its evolution. 

\begin{figure}[H]
\centering
\includegraphics[width=0.9\linewidth]{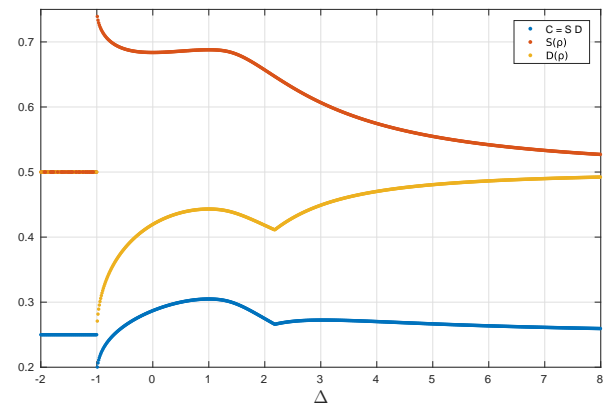}
\caption{ Quantum Statistical Complexity Measure $\mathcal{C}(\varrho_{i+1})$ (blue), von Neumann Entropy $S(\varrho_{i+1})$ (orange), and the disequilibrium function given by the Trace distance $D(\varrho_{i+1},\mathcal{I})$ (yellow) in function of $\Delta$. All measures were calculated for the two-qubit reduced density matrix of sites $i$ and $i+1$, $\varrho_{i+1}$, for $\Delta\in [-1,8]$, in the thermodynamic limit.}
\label{CSD.XXZ}
\end{figure}

In order to investigate the cusp point of $\mathcal{C}(\varrho_{i+1})$ at $\Delta = 2.178$, let us consider what happens with the state $\varrho_{i+1}$, given by Eq. \eqref{estado.biao}, as $\Delta$ varies. The state $\varrho_{i+1}$ given in Eq. \eqref{estado.biao} can be easily diagonalized, thus let us study the following matrix $\varrho_{i+1} - \mathbb{I}/4$,  which plays an important role in the quantum statistical complexity measure as already discussed. This matrix have the following eigenvalues: $\{\frac{1}{4} (2 \mean{\sigma_{i}^x\sigma_{i+1}^x}-\mean{\sigma_{i}^z\sigma_{i+1}^z}),\frac{1}{4} (-2\mean{\sigma_{i}^x\sigma_{i+1}^x}-\mean{\sigma_{i}^z\sigma_{i+1}^z}), \frac{1}{4}\mean{\sigma_{i}^z\sigma_{i+1}^z},\frac{1}{4}\mean{\sigma_{i}^z\sigma_{i+1}^z}\}$. 
As the $\Delta$ value increases in the interval $[1,3]$, correlation values in the $x$ direction also increase while correlations in $z$ decrease reaching the local minimum observed in Fig. \ref{CSD.XXZ}. In this interval, the eigenvalue $2\mean{\sigma_{i}^x\sigma_{i+1}^x} - \mean{\sigma_{i}^z\sigma_{i+1}^z}$ goes through zero, and this, therefore, should cause the correlation transition. This correlation transition is due the fact that this eigenvalue vanishes for some $\Delta$ in this interval, which should imply a change of orientation of spin correlations. 

By following this reasoning, in order to determine such points at which changes of orientation of spin correlations occurs, it is necessary to solve numerically some integral equations, given by the eigenvalues of Eq. \eqref{estado.biao}, which are functions of expected values given in Refs. \cite{biao.liang,Justino,Takahashi,Kato04,Kato03}.

This procedure has the objective of determining the solution for which values of $\Delta$ the following integral equation holds: $2 \mean{\sigma_{i}^x\sigma_{i+1}^x} - \mean{\sigma_{i}^z\sigma_{i+1}^z} = 0$. Due to the fact that the values of $\mean{\sigma_{i}^y\sigma_{i+1}^y} = \mean{\sigma_{i}^x\sigma_{i+1}^x}$, for this Hamiltonian, the solution of this equation indicates the point where planar $xy$-correlation decreases while $z$-correlations increases in absolute value, although we are already in the ferromagnetic phase. For $\Delta\to\infty$, the system moves towards a configuration that exhibits correlation only in the $z$-direction. Proceeding in the same way, by solving the other integral equation given in the eigenvalues set: $-2\mean{\sigma_{i}^x\sigma_{i+1}^x}-\mean{\sigma_{i}^z\sigma_{i+1}^z} =0$, we obtain a divergence solution for which $\Delta = -1$.

\begin{figure}[h]
\centering
\includegraphics[width=0.9\linewidth]{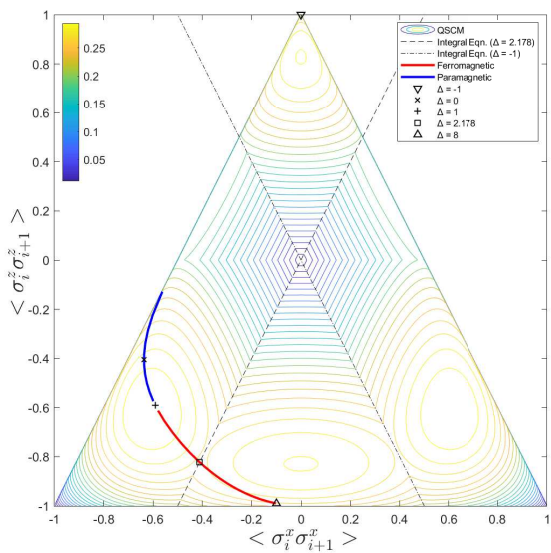}
\caption{Contour map of $\mathcal{C}(\varrho_{i+1})$, (QSCM), in function of the correlation functions $\mean{\sigma_{i}^x\sigma_{i+1}^x}$ and $\mean{\sigma_{i}^z\sigma_{i+1}^z}$. The dash inclined straight line represents the integral equation whose solution is $\Delta = 2.178$, and the dash-point straight line represents the curve for $\Delta = -1$, for which there is a divergence point. The indicated path inside the contour map shows the curve performed by the variation of $\mathcal{C}(\varrho_{i+1})$, inside the positive semi-definite density matrix space, for $\Delta\in[-1,8]$. The highlighted points are: $\Delta = -1$, ($\triangledown$); $\Delta = 0$, ($\times$); $\Delta = 1$, ($+$); $\Delta = 2.178$, ($\square$) and $\Delta = 8$, ($\vartriangle$). Also the ferromagnetic region (red) and the paramagnetic region (blue) are also represented in this path.}
\label{espaco.XXZ}
\end{figure}

In Fig. \ref{espaco.XXZ}, we call attention to a contour map of $\mathcal{C}(\varrho_{i+1})$ in function of  $\mean{\sigma_{i}^x\sigma_{i+1}^x}$, and $\mean{\sigma_{i}^z\sigma_{i+1}^z}$. The triangle region  represents the convex hull of positive semi-definite density matrices. The vertices of this triangle are given by: $(\mean{\sigma_{i}^x\sigma_{i+1}^x},\mean{\sigma_{i}^z\sigma_{i+1}^z}) = \{(-1,-1);\,(0,1);$ and $(1,-1)\}$. Along with the contour map of QSCM as a function of the correlation functions in $x$ and $z$ directions, the integral equations obtained while the two eigenvalues of $\varrho_{i+1} - \mathbb{I}/4$ goes to zero are also represented in Fig. \ref{espaco.XXZ}. These integral equations are represented by the two inclined straight lines (the dash and dash-dot ones). The dash and inclined straight line describes the integral equation whose solution is $\Delta = 2.178$. The dash-dot straight line represents the curve for $\Delta = -1$ solution, for which there exists a divergence point (the phase transition point). As previously mentioned, QSCM showed to be sensitive to correlation transitions. In Fig. \ref{espaco.XXZ}, the thick and colourful curve inside the contour map shows the path taken by $\mathcal{C}(\varrho_{i+1})$, while the values of correlations in $x$ and $z$ vary when $\Delta$ increases monotonically in the interval $[-1,8]$. This same path was also presented in Fig. \ref{CSD.XXZ}, on the blue curve. The blue part of the thick curve represents values for the correlations in which we have a paramagnetic state and the red part of the thick and colourful curve indicates the values for the ferromagnetic arrangement. Also we have highlighted some interesting points in this colourful curve by a $\times$: For $\Delta = -1$, ($\triangledown$); for $\Delta = 0$, ($\times$); $\Delta = 1$, ($+$); $\Delta = 2.178$, ($\square$) and for $\Delta = 8$, ($\vartriangle$). 

The Fig. \ref{complex2.XXZ} express QSCM for nearest neighbour, given by $\mathcal{C}(\varrho_{i+1})$, (blue), and for next-to-nearest neighbours, written as $\mathcal{C}(\varrho_{i+2})$, (orange), both in the thermodynamic limit in function of the uni-axial parameter strength $\Delta$.
\begin{figure}[h]
\centering
\includegraphics[width=0.9\linewidth]{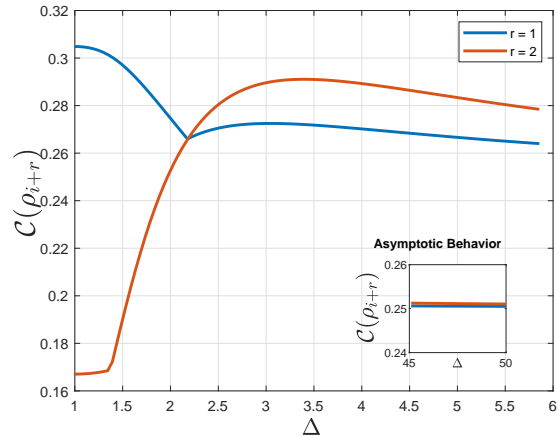}
\caption{$\mathcal{C}(\varrho_{i+r})$ for the two-qubit reduced density matrix of sites $i$ and $i+r$, for $r=1$ (blue), and for $r=2$ (orange), both in the thermodynamic limit in function of the uni-axial parameter strength $\Delta$. The sub-figure shows the asymptotic behaviour for large $\Delta$ for both cases. $\mathcal{C}(\varrho_{i+1})$ and $\mathcal{C}(\varrho_{i+2})\to 1/4$ when $\Delta\to\infty$.}
\label{complex2.XXZ}
\end{figure}
The asymptotic limit for both measures ($r=1,2$) is also presented in the sub-figure. As $\Delta\to\infty$, the behaviour of these two correlation functions $\mean{\sigma_{i}^x\sigma_{i+1}^x}\to 0$ and $\mean{\sigma_{i}^z\sigma_{i+1}^z}\to -1$, for both cases. In this limit, the density matrix of the system can be written as $\varrho_{i+r}\to diag\{0, 1/2, 1/2, 0\}$, for both cases and thus, $S(\varrho_{i+r})\to 1/2$. Also, $D(\varrho_{i+r},\mathcal{I})\to 1/2$, which make both measures, (for $r=1$ and for $r=2$), $\mathcal{C}(\varrho_{i+r})\to 1/4$. 
It is interesting to notice that $\mathcal{C}(\varrho_{i+1}) = \mathcal{C}(\varrho_{i+2})$ exactly at $\Delta = 2.178$. The QSCM for nearest neighbour is greater than the QSCM for next-to-nearest neighbour, \emph{i.e.;} $\mathcal{C}(\varrho_{i+1}) > \mathcal{C}(\varrho_{i+2})$, for $-1 \leq \Delta < 2.178$. For $\Delta > 2.178$, $\mathcal{C}(\varrho_{i+1}) < \mathcal{C}(\varrho_{i+2})$, until both goes to $1/4$, for large values of $\Delta$. 


\subsubsection{Study of the Derivatives of QSCM in  XXZ-\textonehalf\hspace{2.0mm}Model}
\label{phase.xxz}

In this section, we approach QSCM into the study of quantum phase transitions in Hamiltonian models. As phase transitions are phenomena in which physical quantities present discontinuities, we study the measure and its derivatives with respect to the parameter of interest in order to detect some discontinuity. The Heisenberg XXZ spin-$1/2$ model has two critical points for $h=0$: the first-order transition occurs at $\Delta = -1$, and a continuous phase transition shows up at $\Delta = 1$ \cite{Takahashi05}. 

The first-order transition point is easily recognisable in Fig. \eqref{CSD.XXZ}, because the QSCM express a discontinuity characteristic of a first order transition, for $\Delta = -1$. this property could also be obtained by solving the following integral equation: $-2\mean{\sigma_{i}^x\sigma_{i+1}^x}-\mean{\sigma_{i}^z\sigma_{i+1}^z} = 0$.

In Fig. \ref{D2complex.XXZ} we show the derivatives of QSCM up to third order with respect to the uni-axial parameter strength $\Delta$. Its form  could indicate a continuous phase transition of high derivative order for $\Delta \sim 1$. 
\begin{figure}[h]
\centering
\includegraphics[width=0.48 \linewidth]{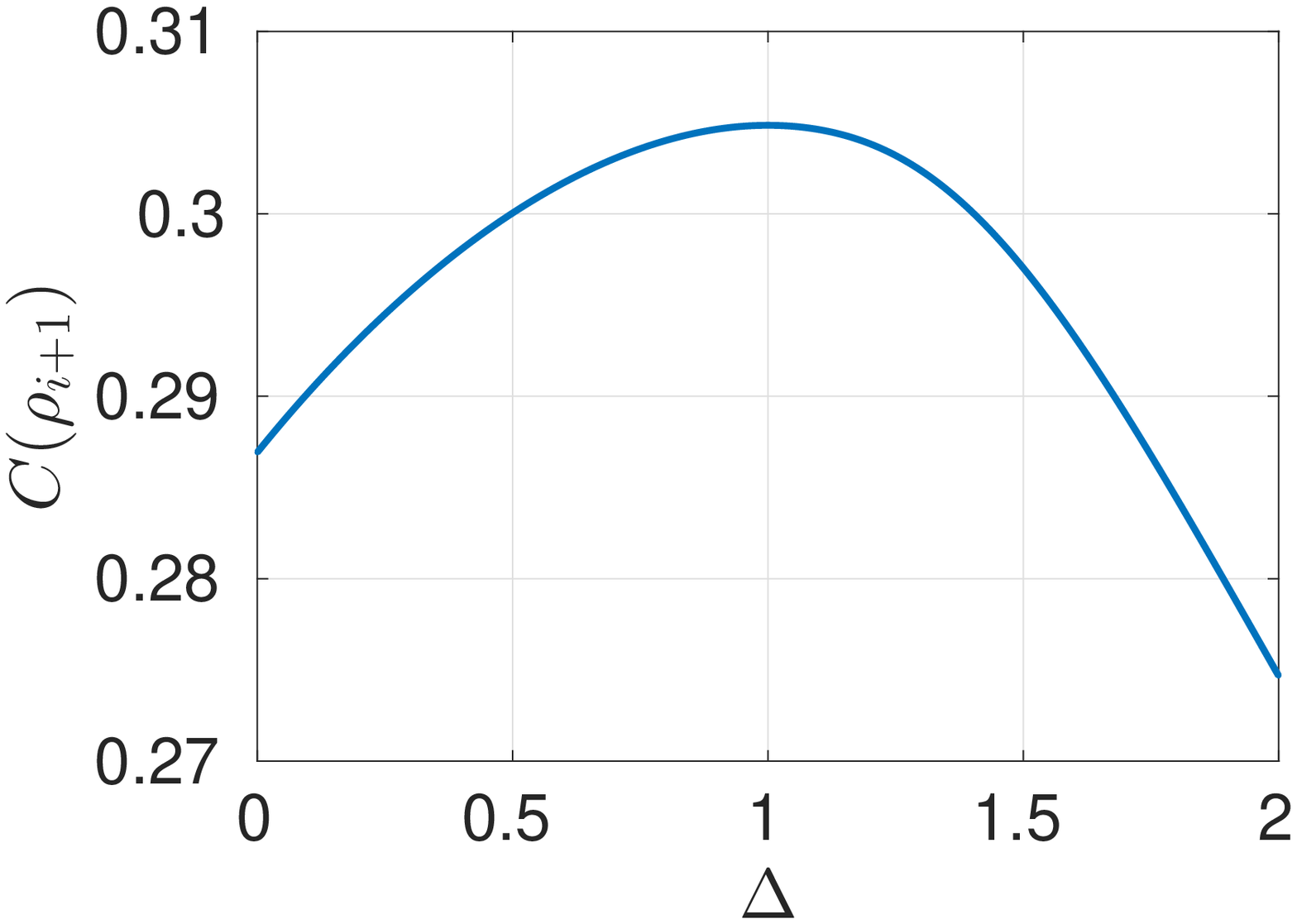}
\includegraphics[width=0.48 \linewidth]{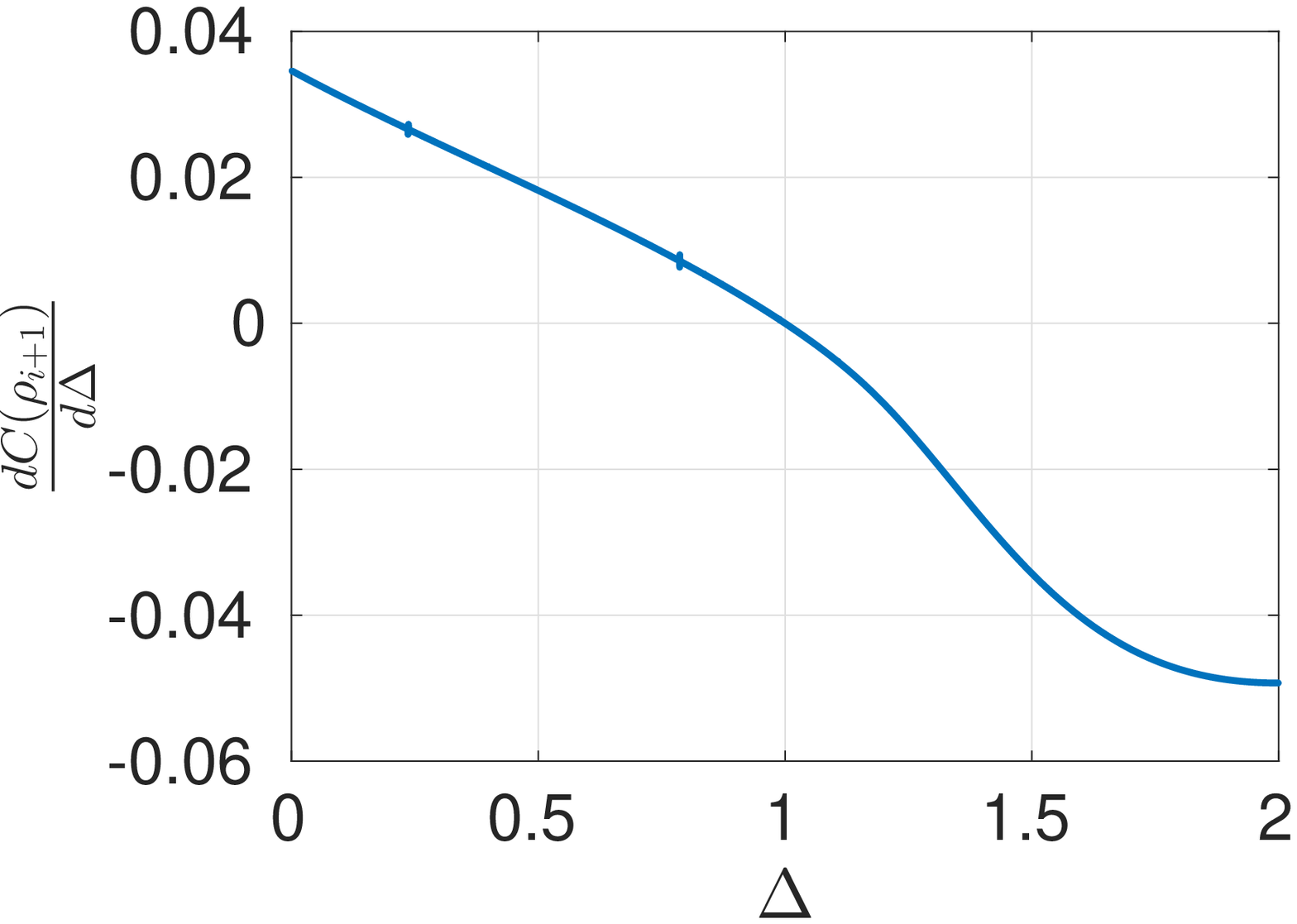}
\includegraphics[width=0.48 \linewidth]{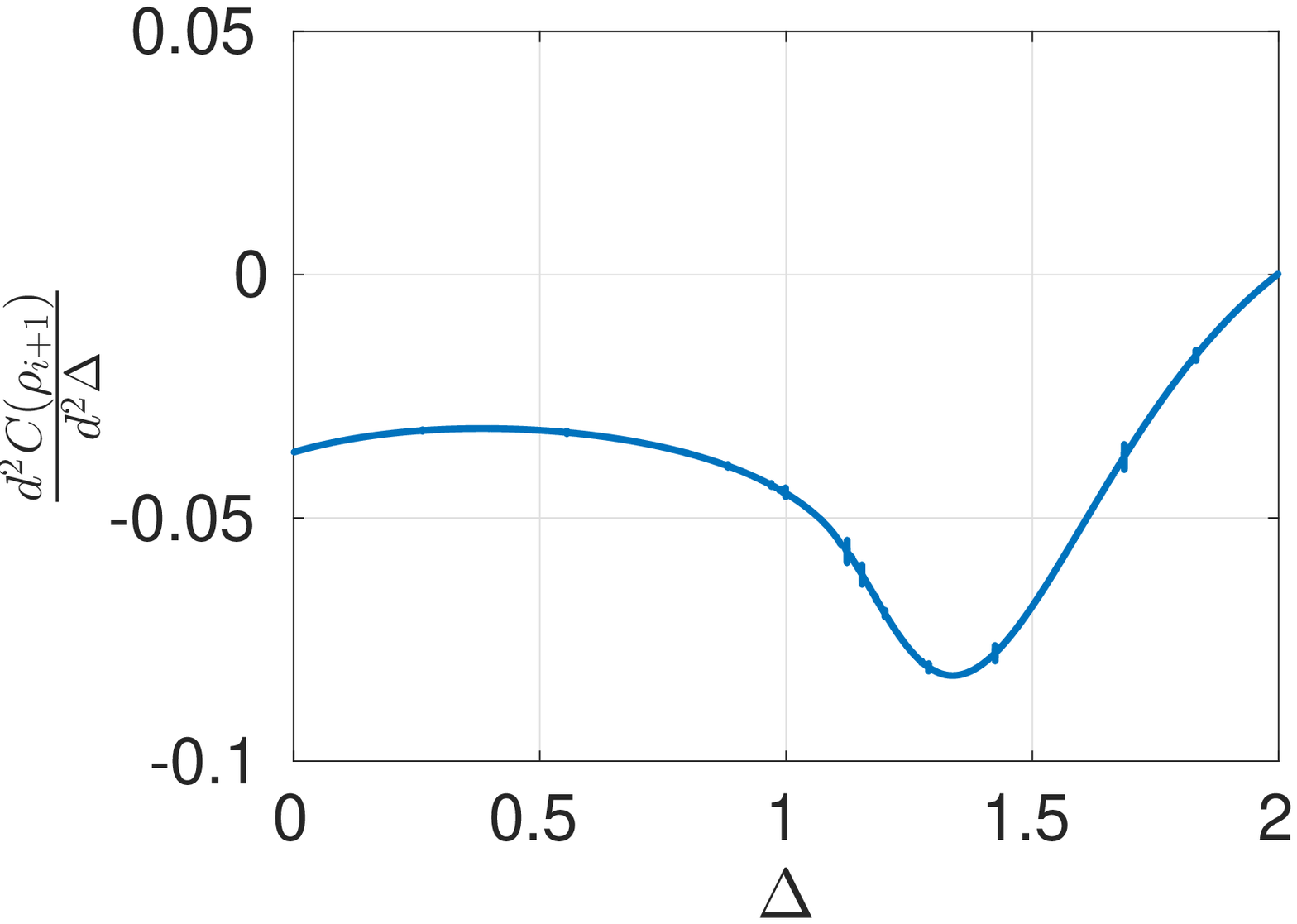}
\includegraphics[width=0.48 \linewidth]{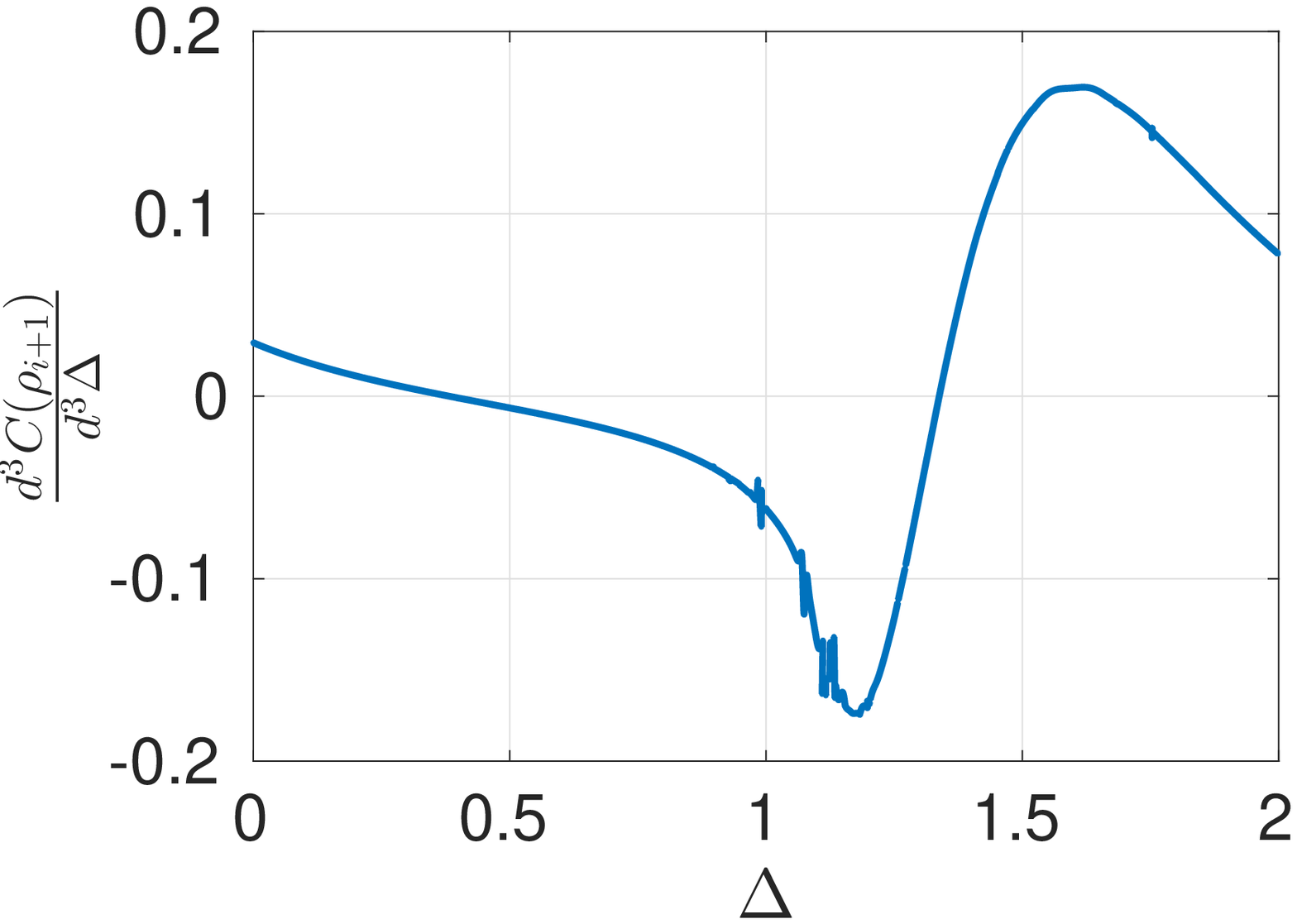}
\caption{Derivatives of the Quantum Statistical Complexity Measure with respect to $\Delta$: 
(top-left)   $\mathcal{C}(\varrho_{i+1})$,
 (top-right) $\frac{d \mathcal{C}(\varrho_{i+1})}{d\Delta}$
 (bottom-left) $\frac{d^2 \mathcal{C}(\varrho_{i+1})}{d\Delta^2}$
 (bottom-right) $\frac{d^3 \mathcal{C}(\varrho_{i+1})}{d\Delta^3}$.
 The small, but abrupt, oscillations of the derivatives around $\Delta = 1$ are not physical, but simply due to the numerical accuracy.
}
\label{D2complex.XXZ}
\end{figure}

One can notice that a discontinuity tends to converge to $\Delta \sim  1$ as we take into account higher orders of the derivative. This a evidence that quantum statistical measure of complexity could witness high order quantum phase transitions.

\section{Conclusions}
\label{conclusao}

We introduced a quantum version for the statistical complexity measure: the Quantum Statistical Measure of complexity (QSCM), and displayed some of its properties.
The measure has demonstrated to be useful and physically meaningful. It showed to possess several of the expected properties for a \emph{bona fide} complexity measure and demonstrated its possible usefulness in others areas of quantum information theory. 

We presented in the manuscript two applications of the QSCM, investigating the physics of two exactly solvable quantum Hamiltonian models, namely: the $1D$-quantum Ising Model and the Heisenberg XXZ spin-$1/2$ chain, both in the thermodynamic limit. 
Firstly we calculated the QSCM for one-qubit, in the Bloch's base, and we determined this measure as a function of the magnitude of the Bloch vector $r$. We computed this magnitude, in the thermodynamic limit, first by calculating analytically the measure for the one-qubit state reduced density matrix from $N$ spins, and later, in order to study the quantum phase transition for the $1D$-quantum Ising model, we performed the limit $N\to\infty$.

For the $1D$-quantum Ising model, we obtained the quantum phase transition point at $g=1$. In this way, we have found that the QSCM can be used as a signalling of quantum phase transitions for this model.

Secondly, we studied the Heisenberg XXZ spin-$1/2$ chain, and by means of QSCM we evince a point at which a correlation transition occurs for this model. Physically, at $\Delta = 2.178$, the planar $xy$-correlation decreases while the $z$-correlations decreases in value and increases in absolute value, reaching a minimum point, although we are already in the ferromagnetic organisation. This competition between these two different alignment of correlations demonstrates an order-disorder transition in which the measure was shown to be sensitive.    

We have studied the derivatives of the QSCM and they demonstrated to be sensitive to the quantum transition points. As a summary of this study for the Heisenberg XXZ spin-$1/2$, we can list: (\emph{i}) the quantum statistical measure of complexity characterize the first-order quantum phase transition at $\Delta = -1$, (\emph{ii}) also evince the  continuous quantum phase transition at $\Delta = 1$, and (\emph{iii}) a witness order-disorder transition at $\Delta = 2.178$, related to the alignment of the spins correlations.
\begin{acknowledgments}
This work was partially supported by  Brazilian agencies Fapemig, Capes, CNPq and  INCT-IQ through the project (465469/2014-0). T. D. also acknowledge the 
support from the Austrian Science Fund (FWF) through the project P 31339-N27.  F.I. acknowledge the financial support of the Brazilian funding agencies CNPQ (308205/2019-7) and FAPERJ.
\end{acknowledgments}


\clearpage

\hypertarget{sec:appendix}
\onecolumngrid
\appendix
\renewcommand\appendixname{Appendix}
\renewcommand\proof{\emph{Proof}.\ }
\renewcommand{\thesubsection}{A.\Roman{section}.\alph{subsection}}
\renewcommand{\thesection}{A.\Roman{section}}
\setcounter{equation}{0}
\numberwithin{equation}{section}
\setcounter{figure}{0}
\renewcommand{\thefigure}{A.\arabic{figure}}
\setcounter{table}{0}
\renewcommand{\thetable}{A.\arabic{table}}

\section*{Appendix}
\section{Sub-additivity over copies}
\begin{prop}
\label{appendixA}
Given a product state $\rho^{\otimes n}\in\mathcal{D}(\mathcal{H}^{\otimes n})$, QSCM is a sub-additive function:
 \begin{equation}
\mathcal{C}(\rho^{\otimes n})\leq n \mathcal{C}(\rho). 
\end{equation}
 Equality holds if we choose the disequilibrium function to be the quantum relative entropy, $S(\rho ||\mathcal{I}) = \log N - S(\rho)$, which is additive under tensor product. In this case, the measure will be additive $\mathcal{C}(\rho^{\otimes n}) = n \mathcal{C}(\rho)$.   
\end{prop}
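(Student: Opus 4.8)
The plan is to substitute the product state directly into Definition \ref{d2} and let the factors of $n$ coming from the entropy and from the dimensional normalization cancel. Since $\rho^{\otimes n}$ acts on $\mathcal{H}_N^{\otimes n}$, which has dimension $N^n$, the prefactor becomes $\log N^n = n\log N$, while additivity of the von Neumann entropy gives $S(\rho^{\otimes n}) = nS(\rho)$. Hence
\begin{equation}
\mathcal{C}(\rho^{\otimes n}) = \frac{nS(\rho)}{n\log N}\,D(\rho^{\otimes n},\mathcal{I}^{\otimes n}) = \frac{S(\rho)}{\log N}\,D(\rho^{\otimes n},\mathcal{I}^{\otimes n}),
\end{equation}
and comparing with $n\mathcal{C}(\rho) = \frac{nS(\rho)}{\log N}D(\rho,\mathcal{I})$, the whole claim reduces (using $S(\rho)\ge 0$ and $\log N > 0$) to the single disequilibrium inequality $D(\rho^{\otimes n},\mathcal{I}^{\otimes n}) \le n\,D(\rho,\mathcal{I})$.

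To establish this, I would use a telescoping (hybrid) argument, working at the level of the trace norm to which $D$ is proportional. Writing the difference as a telescoping sum that swaps one factor of $\rho$ for $\mathcal{I}$ at a time,
\begin{equation}
\rho^{\otimes n}-\mathcal{I}^{\otimes n} = \sum_{k=1}^{n} \mathcal{I}^{\otimes(k-1)}\otimes(\rho-\mathcal{I})\otimes\rho^{\otimes(n-k)},
\end{equation}
the triangle inequality bounds $\norm{\rho^{\otimes n}-\mathcal{I}^{\otimes n}}_1$ by the sum of the $n$ term norms. Multiplicativity of the trace norm under tensor products, $\norm{A\otimes B}_1 = \norm{A}_1\norm{B}_1$, together with $\norm{\mathcal{I}^{\otimes(k-1)}}_1 = \norm{\rho^{\otimes(n-k)}}_1 = 1$ for density operators, makes every term equal to $\norm{\rho-\mathcal{I}}_1$, yielding the overall factor $n$ and hence the desired bound on $D$.

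For the equality clause I would repeat the first computation with the disequilibrium replaced by the quantum relative entropy $S(\rho\,\|\,\mathcal{I}) = \log N - S(\rho)$, so that the measure reads $\frac{1}{\log N}S(\rho)\,(\log N - S(\rho))$. The key input is that relative entropy is additive under tensor products, $S(\rho^{\otimes n}\,\|\,\mathcal{I}^{\otimes n}) = nS(\rho\,\|\,\mathcal{I})$, which holds because the logarithm of a tensor product splits into a sum of single-factor terms. Both the entropy and the relative entropy then scale by $n$, and $\frac{1}{n\log N}\,(nS(\rho))\,(n(\log N - S(\rho))) = n\mathcal{C}(\rho)$, saturating the bound.

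The computation is largely bookkeeping, and the one genuinely non-trivial ingredient is the telescoping step: the trace norm is not multiplicative on \emph{differences} of states, so one cannot bound $\norm{\rho^{\otimes n}-\mathcal{I}^{\otimes n}}_1$ directly. The hybrid decomposition is precisely what reduces the $n$-copy difference to a sum of single-copy differences sandwiched between unit-norm density operators, where multiplicativity of the trace norm applies cleanly. I expect this to be the main obstacle, with the additivity of the von Neumann entropy and of the relative entropy being standard facts invoked without difficulty.
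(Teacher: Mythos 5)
Your proof is correct, and its skeleton matches the paper's: cancel the factor of $n$ between $S(\rho^{\otimes n})=nS(\rho)$ and the normalization $\log N^{n}=n\log N$, so that everything reduces to the disequilibrium inequality $D(\rho^{\otimes n},\mathcal{I}^{\otimes n})\le n\,D(\rho,\mathcal{I})$. Where you diverge is in how that inequality is handled. The paper treats it as an imported fact --- it cites the two-factor sub-additivity $D(\rho\otimes\rho',\mathcal{I}\otimes\mathcal{I})\le D(\rho,\mathcal{I})+D(\rho',\mathcal{I})$ from Wilde and then runs an induction on the number of copies --- whereas you prove it from scratch for general $n$ via the telescoping (hybrid) decomposition
\begin{equation*}
\rho^{\otimes n}-\mathcal{I}^{\otimes n}=\sum_{k=1}^{n}\mathcal{I}^{\otimes(k-1)}\otimes(\rho-\mathcal{I})\otimes\rho^{\otimes(n-k)},
\end{equation*}
combined with the triangle inequality, multiplicativity of the trace norm under tensor products, and $\norm{\sigma}_1=1$ for density operators. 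This is exactly the standard proof of the cited lemma, so your argument is self-contained where the paper's is not, and it avoids the induction bookkeeping entirely. You also explicitly verify the equality clause for the relative-entropy disequilibrium, which the paper asserts in the proposition statement but does not compute. Both routes are sound; yours buys transparency about where the factor of $n$ in the trace-distance bound actually comes from, at the cost of a slightly longer write-up.
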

\begin{proof}
von Neumann entropy is additive for product states: $S(\rho^{\otimes n}) = nS(\rho)$. This means that information contained in an uncorrelated system $\rho^{\otimes n}$, is equal to the sum of its constituents parts. This equality also holds for Shannon Entropy. However, the trace distance is sub-additive under respect to the tensor product: $D(\rho\otimes\rho ',\mathcal{I}\otimes \mathcal{I}) \leq D(\rho,\mathcal{I}) + D(\rho ',\mathcal{I}),\,\forall\,\rho,\rho '$ \cite{wilde17}. 

We will prove this proposition by induction and also we will only consider states with same dimension, \emph{i.e.;} $\text{dim}(\rho^{\otimes n}) = \text{dim}(\mathcal{I}^{\otimes n}),\forall\, n$. It is easy to observe that the proposition is true for $n=1$.  Let us suppose now, as an induction step, that for some arbitrary $n = k > 1,\,\,k\in\mathbb{N}$, $\mathcal{C}(\rho^{\otimes k}) \leq k \mathcal{C}(\rho)$. 
 \begin{align}
\mathcal{C}(\rho^{\otimes k+1}) &= \displaystyle\frac{S(\rho^{\otimes k}\otimes \rho)}{\text{log}(N^{k+1})}D(\rho^{\otimes k}\otimes \rho,\mathcal{I}^{\otimes k+1}),\\
\mathcal{C}(\rho^{\otimes k+1}) &= (k+1)\displaystyle\frac{S(\rho)}{\text{log}(N^{k+1})}D(\rho^{\otimes k}\otimes \rho,\mathcal{I}^{\otimes k+1})\\
\mathcal{C}(\rho^{\otimes k+1}) &= \frac{S(\rho)}{\text{log}(N)}D(\rho^{\otimes k}\otimes \rho,\mathcal{I}^{\otimes k+1}).
\end{align}
Using the sub-additivity property for the trace distance: $D(\rho^{\otimes k},\mathcal{I})\leq k D(\rho,\mathcal{I})$, and also using $S(\rho^{\otimes k}) = k S(\rho)$, for an arbitrary $n = k+1$
\begin{align}
    \mathcal{C}(\rho^{\otimes k+1}) &\leq(k+1)\frac{S(\rho)}{\text{log}(N)}D(\rho,\mathcal{I}),\\
    & = (k+1) \mathcal{C}(\rho)
\end{align}
Therefore $\mathcal{C}(\rho^{\otimes n}) \leq n \mathcal{C}(\rho),\,\,\forall\, n.$
\end{proof}

\end{document}